\documentclass[letterpaper, 10 pt, conference]{ieeeconf}
\IEEEoverridecommandlockouts

\usepackage[colorlinks]{hyperref}
\hypersetup{
    colorlinks,
    linkcolor=black,
    citecolor=black,
    filecolor=magenta,
    urlcolor=cyan,
}

\usepackage{resizegather}

\usepackage{cite}
\usepackage{amsmath,amssymb,amsfonts,mathrsfs}
\usepackage{graphicx}
\usepackage{textcomp}
\usepackage{xcolor}
\usepackage{tcolorbox}
\usepackage{textcomp}
\usepackage{tablefootnote}
\usepackage{threeparttable}
\usepackage{caption, subcaption}
\usepackage{bbm}
\usepackage{dsfont}
\usepackage{tikz}
\usepackage{graphicx}
\usepackage{tkz-euclide}
\usepackage{algorithm}
\usepackage{algpseudocode}
\usetikzlibrary{positioning}
\usetikzlibrary{shapes}
\usetikzlibrary{shapes.misc}
\usetikzlibrary{shapes.geometric}
\usetikzlibrary{plotmarks}
\usetikzlibrary{intersections}
\usetikzlibrary{calc}
\usetikzlibrary{fit}
\usetikzlibrary{patterns,tikzmark}
\usetikzlibrary{matrix,decorations.pathreplacing,calc}

\tikzset{cross/.style={cross out, draw, 
         minimum size=2*(#1-\pgflinewidth), 
         inner sep=0pt, outer sep=0pt}}


\newcommand{\state}[0]{x}

\newcommand{\norm}[1]{\left\lVert#1\right\rVert}

\newcommand{\vertiii}[1]{{\left\vert\kern-0.25ex\left\vert\kern-0.25ex\left\vert #1 
    \right\vert\kern-0.25ex\right\vert\kern-0.25ex\right\vert}}


\newtheorem{theorem}{Theorem}

\newtheorem{remark}{Remark}

\newtheorem{definition}{Definition}

\makeatletter
\renewcommand{\fps@figure}{htp}
\renewcommand{\fps@table}{htp}
\makeatother

\def\BibTeX{{\rm B\kern-.05em{\sc i\kern-.025em b}\kern-.08em
    T\kern-.1667em\lower.7ex\hbox{E}\kern-.125emX}}
    
\begin{document}

\title{Fixed time convergence guarantees for Higher Order Control Barrier Functions}

\author{Janani S K$^{1}$, Shishir Kolathaya$^{2}$
\thanks{
}
\thanks{$^{1}$Department of Engineering Design, Indian Institute of Technology, Madras.
{\tt\scriptsize ed22b017@smail@iitm.ac.in}
.
}
\thanks{$^{2}$ Center for Cyber-Physical Systems, Indian Institute of Science (IISc), Bengaluru.
{\tt\scriptsize shishirk@iisc.ac.in}
.
}%
}

\newcommand{\mcom}[1]{{\color{red}{{[MT: #1]}}}}

\maketitle
\begin{abstract}
We present a novel method for designing higher-order Control Barrier Functions (CBFs) that guarantee convergence to a safe set within a user-specified finite. Traditional Higher Order CBFs (HOCBFs) ensure asymptotic safety but lack mechanisms for fixed-time convergence, which is critical in time-sensitive and safety-critical applications such as autonomous navigation. In contrast, our approach imposes a structured differential constraint using repeated roots in the characteristic polynomial, enabling closed-form polynomial solutions with exact convergence at a prescribed time. We derive conditions on the barrier function and its derivatives that ensure forward invariance and fixed-time reachability, and we provide an explicit formulation for second-order systems. Our method is evaluated on three robotic systems — a point-mass model, a unicycle, and a bicycle model — and benchmarked against existing HOCBF approaches. Results demonstrate that our formulation reliably enforces convergence within the desired time, even when traditional methods fail. This work provides a tractable and robust framework for real-time control with provable finite-time safety guarantees.
\end{abstract}

\section{Introduction}
\label{section: Introduction}
Safety is a fundamental requirement in the design of control systems for autonomous and safety-critical applications such as mobile robots, autonomous vehicles, and robotic manipulators. 
Several paradigms have been proposed for safety-critical control. Safe Reinforcement Learning~\cite{achiam2017constrained, NEURIPS2022_9a8eb202, NIPS2017_766ebcd5} incorporates safety specifications as constraints during policy learning, enabling data-driven enforcement but lacking formal guarantees and potentially yielding unsafe behaviors during exploration. Hamilton–Jacobi reachability~\cite{8263977, tayal2025physics} offers worst‑case, formal safety assurances via backward‑reachable tubes, but requires solving the Hamilton–Jacobi–Bellman PDE, whose complexity scales exponentially with system dimensionality. 

\nocite{tayal2024semi}
An effective strategy for ensuring safety in control systems involves the use of Control Barrier Functions (CBFs)\cite{Ames_2017, ames2019control}. These functions enable the synthesis of safe controllers by formulating the control problem as a Quadratic Program (QP), which can be solved efficiently in real time using modern solvers. This approach has been successfully employed in a wide range of safety-critical applications, such as adaptive cruise control~\cite{Ames_2017}, aggressive aerial maneuvers~\cite{7525253, tayal2024control}, and legged robot locomotion~\cite{ames2019control, nguyen2015safety}. In all these domains, the safety and performance guarantees are inherently tied to the specific CBF used. 

In recent years, Higher Order Control Barrier Functions (HOCBFs) \cite{xiao2021highordercontrollyapunovbarrier,7524935,xiao2019controlbarrierfunctionssystems,10314714} have extended this framework to systems with relative degree greater than one, allowing safety constraints to be enforced even when control inputs do not directly influence the first derivative of the safety function.Some of examples where HOCBFs are used are \cite{10664191,10830189,10768176}
While it provide powerful tool for safety-critical control, most existing formulations guarantee asymptotic convergence to the safe set — i.e., convergence in infinite time. In many real-world scenarios, however, finite-time convergence is critical. For instance, in collision avoidance or emergency braking scenarios, it is essential to ensure that the system reaches a safe configuration within a known and bounded time horizon.
\newline
Most of the research is done on fixed time convergence of first order CBFs,in \cite{4587141} the author proposes the two of the commonly used finite time CBF conditions for first order systems and the authors in \cite{xiao2021highordercontrollyapunovbarrier} extend this further to higher order systems for finite time safety. \cite{8619113,Garg_2019,9482751,Ames_2017} are some of the places where first order finite time convergence has been used. Most of research is to be focused on finite time convergence, it guarantees convergence in finite time, but not on fixed time convergence, guarantee convergence in fixed time irrespective of the initial state of the system.Most of them are used for Serial temporal logic (STL) tasks as in case of \cite{9831835,8404080,Ames_2017}. But there are not much work done on higher order fixed time convergence other than \cite{xiao2021highordercontrollyapunovbarrier}. 

To address this gap, we propose a novel class of HOCBFs that enforce fixed-time convergence to a safe set by designing a specific polynomial-time structure with a known zero at the user-specified time. By leveraging repeated roots in the associated characteristic equation, we derive an analytically tractable formulation whose solution converges to the boundary of the safe set precisely at time T.
\nocite{tayal2024learning, tayal2025cpncbf}
The main contributions of this paper are as follows:
\begin{itemize}
    \item We formulate a polynomial-time differential constraint for higher-order CBFs that ensures guaranteed finite-time convergence to a safe set.
    \item We derive closed-form expressions for the required rate parameters and constraint conditions, specifically for systems with relative degree two.
    \item We demonstrate the effectiveness of our method through simulations on point-mass, unicycle, and bicycle dynamics.
    \item We compare against a baseline method proposed in \cite{xiao2021highordercontrollyapunovbarrier} based on the finite time convergence, computation time, number of parameters to be chosen beforehand.
\end{itemize}

The remainder of the paper is organized as follows: Section II reviews background on control barrier functions and the models used. Section III presents the problem formulation. Section IV introduces our finite-time HOCBF method and comparisons. Section V presents simulations, and Section VI concludes with limitations and future directions.

\section{Preliminaries}
\label{section: Background}
In this section, we will formally introduce Control Barrier Functions (CBFs) and their importance for real-time safety-critical control.

\subsection{System Description}
\label{subsec:model}

We consider a continuous time control system with state $x(t) \in \mathcal{X}\subseteq\mathbb{R}^{n}$ and input $u(t) \in \mathbb{U} \subseteq \mathbb{R}^{m}$ at time $t\geq0$. The state dynamics is described by the following equation:
\begin{align}
    \label{eq:state-ode}
    \dot{x}(t) &= f(x(t))+g(x(t))u(t),
\end{align}
where functions $f: \mathbb{R}^{n} \rightarrow \mathbb{R}^{n}$ and $g: \mathbb{R}^{n} \rightarrow \mathbb{R}^{n \times m}$ are locally Lipschitz.

Consider a set $\mathcal{C}$ defined as the \textit{super-level set} of a continuously differentiable function $h:\mathcal{X}\subseteq \mathbb{R}^n \rightarrow \mathbb{R}$ yielding,
\begin{align}
\label{eq:setc1}
	\mathcal{C}                        & = \{ \state \in \mathcal{X} \subset \mathbb{R}^n : h(\state) \geq 0\} \\
\label{eq:setc2}
	\mathcal{X}-\mathcal{C} & = \{ \state \in \mathcal{X} \subset \mathbb{R}^n : h(\state) < 0\}.
\end{align}
We further let the interior and boundary of $\mathcal{C}$ be $\text{Int}\left(\mathcal{C}\right) = \{ \state \in \mathcal{X} \subset \mathbb{R}^n : h(\state) > 0\}$ and $\partial\mathcal{C} = \{ \state \in \mathcal{X} \subset \mathbb{R}^n : h(\state) = 0\}$, respectively. 

We define a Lipschitz continuous control policy $\mu: \mathbb{R}^n \rightarrow \mathbb{R}^m$ to be a mapping from the sequence of states $x(t)$ to a control input $u(t)$ at each time $t$. The safety of a controlled system is defined as follows. 

\begin{definition}[Safety]
    \label{def:positive-invariance}
    A set $\mathcal{C} \subseteq \mathcal{X} \subseteq \mathbb{R}^{n}$ is positive invariant under dynamics \eqref{eq:state-ode} and control policy $\mu$ if $x(0)\in \mathcal{C}$ and $u(t) = \mu(x(t)) \ \forall t \geq 0$ imply that $x(t) \in \mathcal{C}$ for all $t \geq 0$. 
    If $\mathcal{C}$ is positive invariant, then the system satisfies the safety constraint with respect to $\mathcal{C}$.
\end{definition}


\subsection{Control Barrier Functions (CBFs)}

The control policy needs to guarantee the robot satisfies a safety constraint, which is specified as the positive invariance of a given safety region $\mathcal{C}$. The Control Barrier Function (CBF) are widely used to synthesize a control policy with positive invariance guarantees. We next present the definition of CBFs as discussed in \cite{ames2019control} for nonstochastic systems.

\begin{definition}[Control barrier function (CBF)]{
\label{definition: CBF definition}
Given a control-affine system $\dot x=f(x)+g(x)u$, the set $\mathcal{C}$ defined by \eqref{eq:setc1}, with $\frac{\partial h}{\partial \state}(\state) \neq 0$ for all $\state \in \partial \mathcal{C}$, the function $h$ is called the control barrier function (CBF) defined on the set $\mathcal{X}$, if there exists an extended \textit{class}-$\mathcal{K}$ function $\kappa$ such that for all $\state \in \mathcal{X}$:
\begin{equation}
\begin{aligned}
    \underset{ u \in \mathbb{U}}{\text{sup}}\! \left[\underbrace{\mathcal{L}_{f} h(\state) + \mathcal{L}_g h(\state)u} \iffalse+ \frac{\partial h}{\partial t}\fi_{\dot{h}\left(\state, u\right)} \! + \kappa\left(h(\state)\right)\right] \! \geq \! 0,
\end{aligned}
\end{equation}
where $\mathcal{L}_{f} h(\state) = \frac{\partial h}{\partial \state}f(\state)$ and $\mathcal{L}_{g} h(\state)= \frac{\partial h}{\partial \state}g(\state)$ are the Lie derivatives.}
\end{definition}

By \cite{Ames_2017} and \cite{ames2019control}, we have that any Lipschitz continuous control law $\mu(\state)$ satisfying the inequality: $\dot{h} + \kappa( h )\geq 0$ ensures safety of $\mathcal{C}$ if $x(0)\in \mathcal{C}$, and asymptotic convergence to $\mathcal{C}$ if $x(0)$ is outside of $\mathcal{C}$. We consider the case of exponential CBFs in which the extended \textit{class}-$\mathcal{K}$ function $\kappa$(s) used is $\alpha$s where $\alpha$ is positive value.
 
\subsection{Controller Synthesis for Real-time Safety}
\label{subsection: safe_controller}
Having described the CBF and its associated formal results, we now discuss its Quadratic Programming (QP) formulation. 
CBFs are typically regarded as \textit{safety filters} which take the desired input (reference controller input) $u_{ref}(\state,t)$ and modify this input in a minimal way: 

\begin{equation}
\begin{aligned}
\label{eqn: CBF QP}
u^{*}(x,t) &= \min_{u \in \mathbb{U} \subseteq \mathbb{R}^m} \norm{u - u_{ref}(x,t)}^2\\
\quad & \textrm{s.t. } \mathcal{L}_f h(x) + \mathcal{L}_g h(x)u + 
 \kappa \left(h(x)\right) \geq 0.
\end{aligned}
\end{equation}
This is called the Control Barrier Function based Quadratic Program (CBF-QP). (In the simulation done, we used 0 reference control, which is essentially minimizing $||u||^2$)

\subsection{Higher order CBFs}
The CBFs that were discussed earlier, the first derivative of the barrier function itself includes the control input, but there exist cases where higher order derivatives of the barrier functions have the control input, in that case equation (4) would not be enough as there will be multiple derivatives. Using exponential CBFs from \cite{7524935},
\begin{equation}
\begin{aligned}
h^{(r_b)}(x) \geq k_b^1 h(x) + k_b^2 \dot{h}(x) + \cdots + k_b^{r_b - 1} h^{(r_b - 1)}(x).
\end{aligned}
\end{equation}
where $h^{(j)}(x)$ is the $j^{\text{th}}$ derivative of $h(x)$ with respect to $t$.
And $k_b^i$ are constants and greater than 0, $rb$ is the order of the barrier function i.e. $h^{rb}(x)$ depends on control input. \newline
Condition (6) ensures forward invariance of the set $C$ mentioned in (2). If initially the system is not in $C$, then it ensures that the system reaches the safe set $C$ asymptotically, i.e within infinite time, but it doesn't specify the time by which the system reaches the safe set. But finite time convergence is needed for practical applications.
\subsection{Models}
In this section we describe various models used for simulation. All of the systems that we are considering are in $2-D$
\subsubsection{Point-mass model}
\par A unicycle model has state variables $x_p$, $y_p$, $v_x, v_y $ denoting the pose, linear velocity, and angular velocity, respectively. The control inputs are linear acceleration in x direction $(u_x)$ and linear acceleration in y direction $(u_y)$.The resulting dynamics of this model is shown below:

\begin{equation}
	\begin{bmatrix}
		\dot{x}_p \\
		\dot{y}_p \\
		\dot{v_x} \\
		\dot{v_y} \\		
	\end{bmatrix}
	=
        \begin{bmatrix}
            v_x\\
            v_y\\
            0 \\
            0
        \end{bmatrix}
	+
	\begin{bmatrix}
            0 & 0 \\
            0 & 0 \\           
            1 & 0 \\
            0 & 1
	\end{bmatrix}
	\begin{bmatrix}
		  u_x \\
		u_y
	\end{bmatrix}
    \label{eqn:point_mass_model}
\end{equation}

\subsubsection{Unicycle model}
\par A unicycle model has state variables $x_p$, $y_p$, $\theta, v $ denoting the pose, linear velocity, and angular velocity, respectively. The control inputs are linear acceleration $(a)$ and angular  velocity $(\omega)$.The resulting dynamics of this model is shown below:

\begin{equation}
	\begin{bmatrix}
		\dot{x}_p \\
		\dot{y}_p \\
		\dot{\theta} \\
		\dot{v} \\		
	\end{bmatrix}
	=
        \begin{bmatrix}
            v\cos\theta\\
            v\sin\theta\\
            0 \\
            0
        \end{bmatrix}
	+
	\begin{bmatrix}
            0 & 0 \\
            0 & 0 \\           
            0 & 1 \\
            1 & 0
	\end{bmatrix}
	\begin{bmatrix}
		a \\
		\omega
	\end{bmatrix}
    \label{eqn:Acceleration controlled Unicycle model}
\end{equation}
We are considering the linear acceleration and the angular velocity as the inputs.
\subsubsection{Bicycle model}
 The bicycle dynamics is as follows:
\begin{align}
    \begin{bmatrix}
        \dot x_p \\
        \dot y_p \\
        \dot \theta \\
        \dot v 
    \end{bmatrix} 
    & = 
    \begin{bmatrix}
        v \cos (\theta+\beta) \\
        v \sin (\theta+\beta ) \\
        \frac{v}{l_r} \sin (\beta) \\
        a 
    \end{bmatrix},
    \label{eq:bicyclemodel} \\
	\text{where} \quad \beta &= \tan^{-1}\left(\frac{l_r}{l_f + l_r}\tan(\delta)\right) \label{eq:SlipSteeringConv}
\end{align}

$x_p$ and $y_p$ denote the coordinates of the vehicle’s centre of mass (CoM) in an inertial frame. $\theta$ represents the orientation of the vehicle with respect to the $x$ axis. $a$ is linear acceleration at CoM.
$l_f$ and $l_r$ are the distances of the front and real axle from the CoM,  respectively.
$\delta$ is the steering angle of the vehicle and 
$\beta$ is the vehicle's slip angle, i.e., the steering angle of the vehicle mapped to its CoM
This is not to be confused with the tire slip angle.

\begin{remark}
We assume that the slip angle is constrained to be small. As a result, we approximate $\cos \beta \approx 1$ and $\sin \beta \approx \beta$. Accordingly, we get the following simplified dynamics of the bicycle model:
\begin{equation}
\label{eqn:bicyclemodel}
	\underbrace{\begin{bmatrix}
		\dot{x}_p \\
		\dot{y}_p \\
		\dot{\theta} \\
		\dot{v}
	\end{bmatrix}}_{\dot{\state}}
	=
	\underbrace{\begin{bmatrix}
		v \cos\theta \\
		v \sin \theta \\
		0 \\
		0
	\end{bmatrix}}_{f(\state)}
	+
	\underbrace{\begin{bmatrix}
		0 & - v\sin\theta \\
		0 & v\cos\theta \\
		0 & \frac{v}{l_r} \\
		1 & 0
	\end{bmatrix}}_{g(\state)}
	\underbrace{\begin{bmatrix}
		a \\
		\beta
	\end{bmatrix}}_{u}.
\end{equation}
Since the control inputs $a, \beta$ are now affine in the dynamics, CBF based Quadratic Programs (CBF-QPs) can be constructed directly to yield real-time control laws, as explained later.
\end{remark}
But for simplicity, we have used model as follows, which is similar to bicycle model but has few changes in it:
\begin{equation}
\label{eqn:bicyclemodel with small beta}
	\underbrace{\begin{bmatrix}
		\dot{x}_p \\
		\dot{y}_p \\
		\dot{\theta} \\
		\dot{v}
	\end{bmatrix}}_{\dot{\state}}
	=
	\underbrace{\begin{bmatrix}
		v \cos\theta \\
		v \sin \theta \\
		0 \\
		0
	\end{bmatrix}}_{f(\state)}
	+
	\underbrace{\begin{bmatrix}
		0 & 0 \\
		0 & 0 \\
		0 & \frac{v}{l_r} \\
		1 & 0
	\end{bmatrix}}_{g(\state)}
	\underbrace{\begin{bmatrix}
		a \\
		\beta
	\end{bmatrix}}_{u}.
\end{equation}

\section{Problem Formulation}
\label{section: Problem Formulation}
In this paper, we consider the problem of ensuring that a system reaches a safe set \( C \), defined via a barrier function \( h(x) \) as described in (2), within a user-specified time \( t \). We focus specifically on the case where \( h \) is a second-order barrier function, which arises naturally in systems where the control input affects the second derivative of the state.

A representative example is a point-mass system where acceleration is the control input. In this case, choosing the barrier function as the negative Euclidean distance to a goal leads to a second-order time derivative in the barrier dynamics. Such formulations are common in robotics and navigation tasks, where guaranteeing goal-reaching behavior within a fixed time horizon is critical.

\section{Method}
\label{section: Method}
From \cite{9482751}, we obtain a finite-time convergence condition for first-order control barrier functions (CBFs) in the form:
\begin{equation}
\dot{h} \geq -\alpha \lvert h \rvert^p \, \text{sign}(h), \quad \text{where} \quad \alpha = \frac{\lvert h(0) \rvert^{1 - p}}{(1 - p) T}.
\end{equation}

Consider the case when \( h(0) \leq 0 \). Then, until \( h(t) \) reaches zero, we have \( \text{sign}(h) = -1 \) and \( \lvert h \rvert = -h \). Substituting this into the differential equation yields:
\[
\dot{h} = \alpha (-h)^p.
\]
Solving this yields a solution of the form:
\[
h(t) = K (t - T)^l,
\]
where \( K \) and \( l \) are functions of the convergence time \( T \) and the power \( p \). This solution reaches zero exactly at \( t = T \), which is the user-defined convergence time. This form is desirable because it yields a polynomial in time with a known root at \( t = T \), unlike the exponential form that arises from choosing \( \kappa(h) = \alpha h \), which leads to a solution of the form:
\[
h(t) = K e^{\alpha t},
\]
which does not naturally include any finite-time root.

\bigskip

In the case of higher-order CBFs, the condition required to ensure invariance of the safe set \( C \) is given by:
\begin{equation}
\mathcal{D}^r h(x) + \alpha_{r-1} \mathcal{D}^{r-1} h(x) + \cdots + \alpha_0 h(x) \geq 0,
\end{equation}
where \( \mathcal{D}^i h(x) \) denotes the \( i^\text{th} \) time derivative of \( h \), and \( r \) is the relative degree.

To obtain a polynomial-in-time solution for \( h(t) \), we propose a method based on repeated roots in the characteristic equation. Specifically, if all the roots are chosen to be the same (i.e., with multiplicity \( n \)),when the relative degree is $n$, the corresponding differential operator becomes \( (d + a)^n \), where \( d \) is the time derivative operator and \( -a \) is the repeated root. Solving the resulting differential equation yields:
\[
h(t) = \left( c_0 + c_1 t + \cdots + c_{n-1} t^{n-1} \right) e^{-a t},
\]
where the coefficients \( c_0, c_1, \ldots, c_{n-1} \) depend on the initial conditions and the parameter \( a \).
\newline
To ensure that \( h(t) \) vanishes exactly at \( t = T \), and does not admit any other zero, we impose that the polynomial part of the solution has a **single root at \( t = T \)**. This can be achieved by ensuring that the minimum value of the  dividend upon division of the polynomial by \( t - T \) is strictly positive, which implies that \( t = T \) is the only root in the domain of interest. It is also not necessary for T to be the only root, even if T is the last root it will suffice. We also need to use the condition that, $c_{n-1} \geq 0$ which is the coefficient of $t^{n-1}$ where n is the relative degree as for large $t$ this term dominates and for the set $C$ to be forward invariant it is necessary for $c_{n-1}\geq 0$
Thus, the condition required to guarantee safety using this structure becomes:
\begin{equation}
(d + a)^n h \geq 0.
\end{equation}

To express the exponential rate \( a \) in terms of the desired convergence time \( T \), we choose a to be a linear function of T as it is the most simple case \( a = c \cdot T \), where \( c \) is a scalar parameter. This allows all coefficients of the polynomial to be functions of \( T \), \( c \), and the initial conditions.

\medskip

We now formalize the main finite-time convergence guarantee for second-order barrier functions via a repeated-root differential constraint.(most of the robotic systems are second order).

\begin{theorem}[Finite-Time Convergence for SO-CBFs]
\label{thm:ft_hocbf}
Let \( h(x) \in C^2 \) be a twice-differentiable control barrier function with relative degree two, and let the initial conditions \( h(0) < 0 \), \( \dot{h}(0) \in \mathbb{R} \) be given. For a desired convergence time \( T > 0 \), define
\[
c = \frac{-h(0) - \dot{h}(0) T}{h(0) T^2}.
\]
Then enforcing the differential constraint
\[
\left( \frac{d}{dt} + cT \right)^2 h(x(t)) \geq 0
\]
ensures that \( h(t) \to 0 \) at time \( t = T \), and \( h(t) > 0 \) for all \( t > T \). That is, the system reaches the safe set within time \( T \) and stays in it thereafter.
\end{theorem}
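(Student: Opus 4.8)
The plan is to reduce the second-order differential inequality to a first-order one via the standard HOCBF auxiliary function, apply a comparison (Grönwall) argument, and then integrate once more to obtain an explicit lower bound on $h(t)$ that is exactly the polynomial-in-time nominal solution with its single root at $t=T$.

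First I would set $a := cT = \dfrac{-h(0)-\dot h(0)T}{h(0)T}$ (well defined since $h(0)<0$ and $T>0$) and introduce the auxiliary function $\psi_1(t) := \dot h(x(t)) + a\,h(x(t))$. Since $h\in C^2$ and $x(\cdot)$ is $C^1$ under the Lipschitz policy $\mu$, the map $\psi_1$ is $C^1$, and the imposed constraint $\big(\tfrac{d}{dt}+a\big)^2 h \ge 0$ is precisely $\dot\psi_1 + a\psi_1 \ge 0$. A direct computation using the definition of $a$ gives the key initial value
\[
\psi_1(0) = \dot h(0) + a\,h(0) = -\frac{h(0)}{T} > 0 .
\]

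Next I would apply the comparison lemma to $\dot\psi_1 \ge -a\psi_1$ to get $\psi_1(t) \ge \psi_1(0)e^{-at} = -\tfrac{h(0)}{T}e^{-at}$ for all $t\ge 0$. This is a first-order linear differential inequality $\dot h + a h \ge -\tfrac{h(0)}{T}e^{-at}$; multiplying by the integrating factor $e^{at}$ collapses the left side to $\tfrac{d}{dt}\!\big(e^{at}h\big)$ and the right side to the constant $-h(0)/T$, and integrating from $0$ to $t$ yields
\[
h(t) \ \ge\ -\frac{h(0)}{T}\,(t-T)\,e^{-at}\ =:\ h^\star(t).
\]
Then I would read off the conclusion from the sign structure of $h^\star$: since $-h(0)/T>0$ and $e^{-at}>0$ for every $t$, we have $h^\star(t)<0$ for $t<T$, $h^\star(T)=0$, and $h^\star(t)>0$ for $t>T$, so $h^\star$ is a degree-one polynomial (times a positive exponential) with a single root at $t=T$ and nonnegative leading coefficient, matching the structural requirements stated just before the theorem. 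Hence $h(T)\ge h^\star(T)=0$, i.e.\ the trajectory has entered the closed safe set no later than $t=T$, and $h(t)\ge h^\star(t)>0$ for all $t>T$, i.e.\ it remains strictly inside thereafter.

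I expect the delicate points to be: (i) justifying the comparison step without any class-$\mathcal{K}$ sign assumption on $a$ — note $c$ (hence $a$) may be positive or negative depending on $\dot h(0)$, so the extended-class-$\mathcal{K}$ version of the comparison principle does not apply; instead one uses Grönwall in integral form, which is sign-agnostic; (ii) the regularity bookkeeping ($h(x(t))\in C^2$, $\psi_1\in C^1$ along closed-loop trajectories) needed for the comparison lemma to be invoked cleanly; and (iii) clarifying that the one-sided bound $h\ge h^\star$ delivers "$h$ reaches $0$ by time $T$ and stays positive afterward," with $h(T)=0$ attained exactly when the differential constraint is active, i.e.\ for the nominal solution $h^\star$ itself.
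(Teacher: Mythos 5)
Your proof is correct, and it follows the same underlying idea as the paper (the repeated root $a=cT$ makes the nominal solution a linear polynomial times an exponential with its single root at $t=T$, and $a_1=-h(0)/T>0$ gives the right sign structure), but you execute the key step differently and more rigorously. The paper solves the equality case $(D+cT)^2h=0$ explicitly, fixes $a_0,a_1$ from the initial data, imposes $h(T)=0$ to \emph{derive} the formula for $c$, and then simply asserts that the inequality case satisfies $h(t)\ge (a_0+a_1t)e^{-cTt}$. You instead take $c$ as given and prove that bound: factoring the operator, introducing $\psi_1=\dot h+ah$ with $\psi_1(0)=-h(0)/T>0$, and running two cascaded integrating-factor (Grönwall-type) comparisons to get
\begin{equation*}
h(t)\;\ge\;-\frac{h(0)}{T}\,(t-T)\,e^{-at},
\end{equation*}
which is exactly the paper's nominal solution. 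This closes the step the paper leaves implicit — a second-order differential inequality does not by itself dominate the corresponding ODE solution, and it is precisely the repeated-root factorization into nested first-order inequalities that makes the comparison legitimate; your remark that the integrating-factor argument is sign-agnostic in $a$ also addresses the paper's observation that $c$ may be negative. What your write-up does not make explicit, and the paper does, is where the formula for $c$ comes from (the root condition $a_0+a_1T=0$ of the nominal solution); you verify it only implicitly through the computation $\psi_1(0)=-h(0)/T$. Both arguments establish the theorem in the same sense the paper intends: $h(T)\ge 0$ (entry into the safe set no later than $T$, with equality when the constraint is active) and $h(t)>0$ for $t>T$.
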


\begin{proof}
Define the differential operator \( D = \frac{d}{dt} \). The constraint is
\[
(D + cT)^2 h(t) = D^2 h + 2cT D h + c^2 T^2 h \geq 0.
\]
The homogeneous solution to the equality case is:
\[
h(t) = (a_0 + a_1 t) e^{-cT t},
\]
where \( a_0, a_1 \) are constants determined by initial conditions:
\[
\begin{aligned}
h(0) &= a_0, \\
\dot{h}(0) &= -cT a_0 + a_1.
\end{aligned}
\Rightarrow a_0 = h(0), \quad a_1 = \dot{h}(0) + cT h(0).
\]

To ensure that \( h(T) = 0 \), plug into the solution:
\[
h(T) = (a_0 + a_1 T) e^{-cT } = 0.
\]
Since the exponential is nonzero, the root condition requires:
\[
a_0 + a_1 T = 0.
\]
Substituting for \( a_0 \) and \( a_1 \), we get:
\[
h(0) + (\dot{h}(0) + cT h(0)) T = 0,
\]
\[
\Rightarrow h(0) + \dot{h}(0) T + cT^2 h(0) = 0,
\]

This is exactly gives the expression for \( c \).
\[
\Rightarrow c = \frac{-h(0) - \dot{h}(0) T}{h(0) T^2}.
\]
Thus, the solution \( h(t) = (a_0 + a_1 t) e^{-cT t} \) vanishes at \( t = T \). Since the polynomial is linear and has a single root at \( t = T \), and \( a_{1} = -\frac{h(0)}{T} \geq 0 \) when \( h(0) < 0 \).
$c$ can be both, either positive or negative based on the initial conditions and time $T$, if $c$ is negative, the system first moves away from the safe set ( the barrier function becomes more negative), this is because of the initial condition.
\newline
In case of inequality, then the CBF has to satisfy the condition:
\[
h(t) \geq (a_0 + a_1 t) e^{-cT t},
\] 
in this case, $h(t)$ reaches 0 at $t \leq T$ as in the equality condition, $h(T)=0$, so it is necessary for $h(T)\geq 0$

\end{proof}
\subsection{Comparing with the existing methods}
In the recent finite time convergence of HOCBFs method proposed in \cite{xiao2021highordercontrollyapunovbarrier}
\begin{equation}
t_b \geq \sum_{i=1}^{m_0} \frac{\left( \psi_{i-1} \left( \mathbf{x}(t_i) \right) \right)^{1 - q_i}}{p_i (1 - q_i)},
\end{equation}

,it is hard to find the parameters $p_i$ and $q_i$ (or $t_i$ )to ensure finite time convergence as we would have to solve a non-linear equation and we don't know the intermediate states ( we need to find the $x(t_i)$ in which $t_i$ is when the higher order derivatives are 0, which is not known) to calculate the parameters beforehand. Since there are parameters chosen by the user there is also a need to optimize these as the behavior might depend on these values chosen (there are 2*$n$-1 parameters that has to be fixed before the start and based on all these and the intermediate state the other parameter has to be fixed) . Choosing the parameters are necessary as in case of some parameters, then system doesn't reach the safe set even asymptotically and remain at a fixed barrier function value, this behavior has been showed in \cite{xiao2021highordercontrollyapunovbarrier} as well and so choosing the parameters are important and they are dependent on the initial condition as well. 
\newline
But in our proposed method there is no such parameter that we would have to fix.
\newline

The existing method can be used in cases when finite time convergence is necessary but need not happen in prescribed time, whereas the proposed method can be used in cases when the system has to converge in specified time.(even in this case it needs lots of tuning the parameters and it varies with the initial condition and the time required.)
\newline
The computation time for each step in out method is also less as the existing method uses recursive condition and it increases with the increase in order, and also there are powers that have to be computed while solving the optimisation problem. But in the proposed method there is no need to compute the powers and all the conditions are linear, so solving the optimisation problem is easier

\section{Simulations}
\label{section: Simulation}
\begin{figure*}[t]
    \centering
    \setlength{\fboxsep}{8pt}   
    \setlength{\fboxrule}{0.8pt} 
    \fbox{
        \begin{minipage}{0.95\textwidth}
            \centering
            \begin{minipage}{0.18\linewidth}
                \centering
                \includegraphics[width=\linewidth]{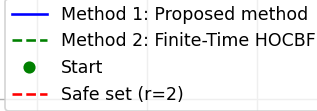}
                \caption*{(a) Legend}
                \label{fig:legend}
            \end{minipage}
            \hfill
            \begin{minipage}{0.26\linewidth}
                \centering
                \includegraphics[width=0.95\linewidth]{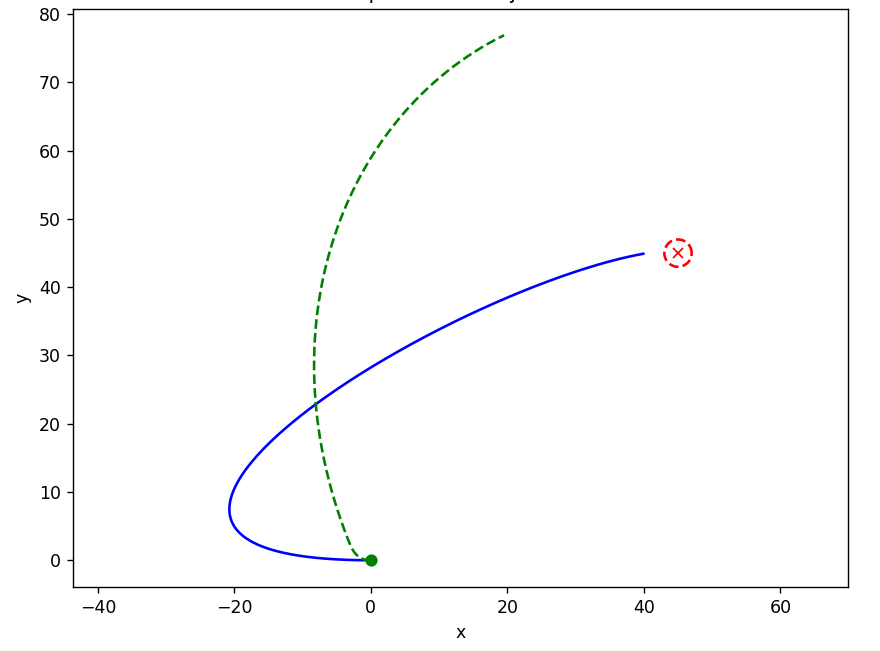}
                \caption*{(b) Pointmass (0,0,-10,0)}
                \label{fig:point}
            \end{minipage}
            \hfill
            \begin{minipage}{0.26\linewidth}
                \centering
                \includegraphics[width=0.95\linewidth]{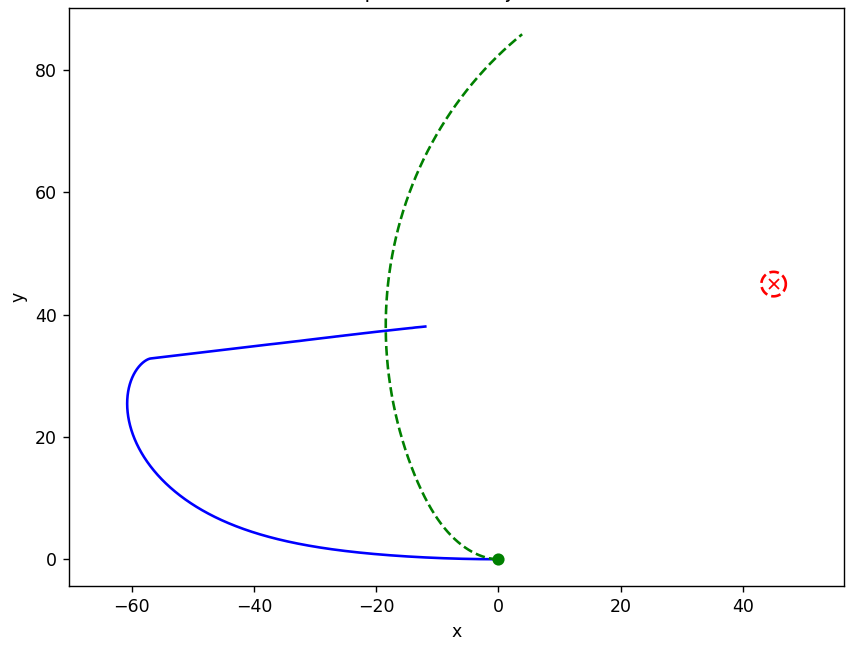}
                \caption*{(c) Unicycle (0,0,0,-20)}
                \label{fig:uni}
            \end{minipage}
            \hfill
            \begin{minipage}{0.26\linewidth}
                \centering
                \includegraphics[width=0.95\linewidth]{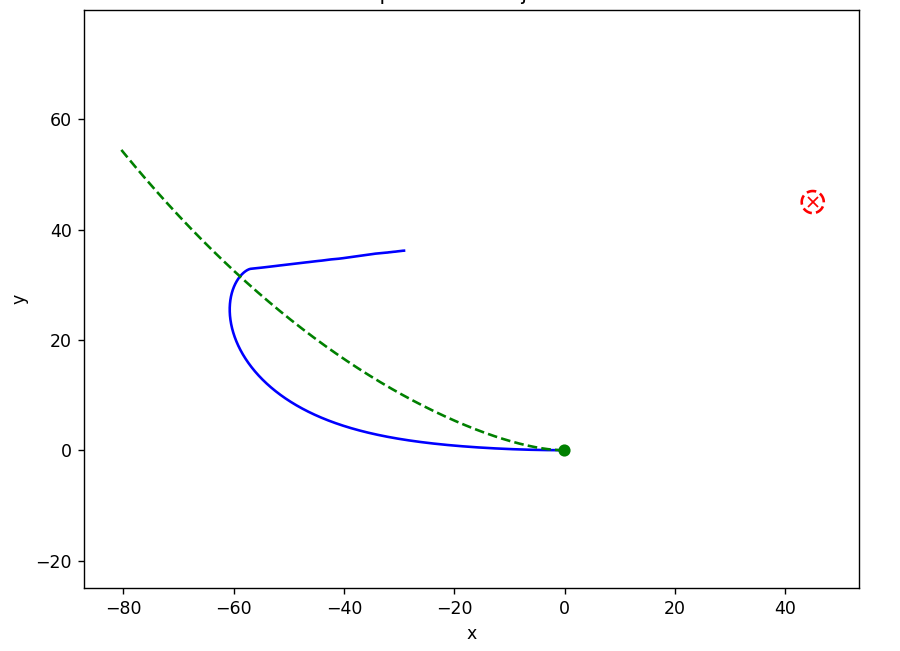}
                \caption*{(d) Bicycle (0,0,0,-20)}
                \label{fig:bi}
            \end{minipage}

            \vspace{1em}
            \captionof{figure}{Comparison of the proposed and existing methods across different models.The proposed method is more closer (or towards) the safe set than the existing method. The legend (a) corresponds to the color and marker styles used in the trajectory plots which is consistent throughout the paper and the initial condition used for each model is mentioned under the respective plots. All the models are for user defined time T=5 sec and discretization dt=0.001 sec}
            \label{fig:boxed_combined}
        \end{minipage}
    }
\end{figure*}

In this section, we assess the efficacy of our proposed methodology by applying it to the Point-Mass model, Unicycle model, Bicycle model as mentioned above. 
\newline
We also compare our method to the existing method as proposed in \cite{xiao2021highordercontrollyapunovbarrier} based on the simulation done in Point-mass system and compare the two in general.

\subsection{Applying the proposed method}
The general setup used is: The safe set is a circle of radius 2 with the center at (45,45). The discretization interval is 0.001. The initial condition is 0,0,0,0. The optimization is performed to minimize $||u||^{2}$ and satisfy the finite time barrier condition. The barrier function used is $4-||s-s_g||^2$. The control inputs and the states are unconstrained. The time used (T) is 10 sec, unless otherwise specified. So ideally the model should reach the safe set in 10000 iterations.

\subsubsection{Point Mass model}

The iteration in which it reaches the safe set is 10004. In 1.b, point-mass model was used.

\subsubsection{Unicycle}

The iteration in which it reaches the safe set is 10087.
In 1.c, unicycle model was used.

\subsubsection{Bicycle model}

The L used is 10. The iteration in which it reaches the safe set is 10256. In 1.d, bicycle model was used.
In all this we can observe that the proposed method goes near the goal and shows convergence better than the existing methods.( it doesn't reach the goal because of discretization which is mentioned below)
\subsection{Effect of discretization }
\begin{table}[H]
\caption{Effect of discretization.(initial condition is (0,0,0,0) and T=10 sec}
\centering
\small  
\begin{tabular}{|l|c|c|}
\hline
\textbf{Model} & \textbf{0.001} & \textbf{0.0001} \\
\hline
Point-Mass & 10.003 & 9.9995 \\
Unicycle & 10.087 & 10.021 \\
Bicycle & 10.256 & 10.1054 \\
\hline
\end{tabular}
\label{tab:discrete}
\end{table}
From \ref{tab:discrete}, on changing the discretization from 0.001 to 0.0001 and running all the three models for T=5 sec, observation can be made that as the discretization becomes smaller and smaller i.e. becomes close to 0, the system reaches the safe set in the prescribed time.
\begin{table}[H]
\caption{Number of convergences in the proposed solution with 100 different random initial conditions T=5 sec, dt=0.001}
\centering
\small  
\begin{tabular}{|l|c|c|}
\hline
\textbf{Model} & \textbf{5 sec} & \textbf{7.5 sec} \\
\hline
Point-Mass & 99 & 100 \\
Unicycle & 1 & 100 \\
Bicycle & 5 & 99 \\
\hline
\end{tabular}
\label{tab:dis}
\end{table}

By running the proposed model with 100 different initial condition which are chosen uniformly between [0-20,0-20,-10-10,-10-10] for all the three models and the observation is noted in \ref{tab:dis}. The prescribed time is 5 sec, when exactly run for 5 sec Unicycle and Bicycle model perform poorly, but when run till 7.5 sec , almost all the cases reach the safe set, this is due to the effect of discretization (the discretization used in all the 3 models are 0.001)
\subsection{Comparing with existing method}
\begin{table}[H]
\caption{Comparison of Existing and Proposed Methods (Point-Mass model)}
\centering
\small  
\begin{tabular}{|l|c|c|}
\hline
\textbf{Metric} & \textbf{Existing Method} & \textbf{Proposed Method} \\
\hline
Computation Time & 184.5 sec & 96.46 sec \\
\% Reaching Goal & 1\% & 100\% \\
Parameters & 3 & 0 \\
\hline
\end{tabular}
\label{tab:comparison}
\end{table}

\begin{figure}[H]
    \centering
    \includegraphics[width=0.5\linewidth]{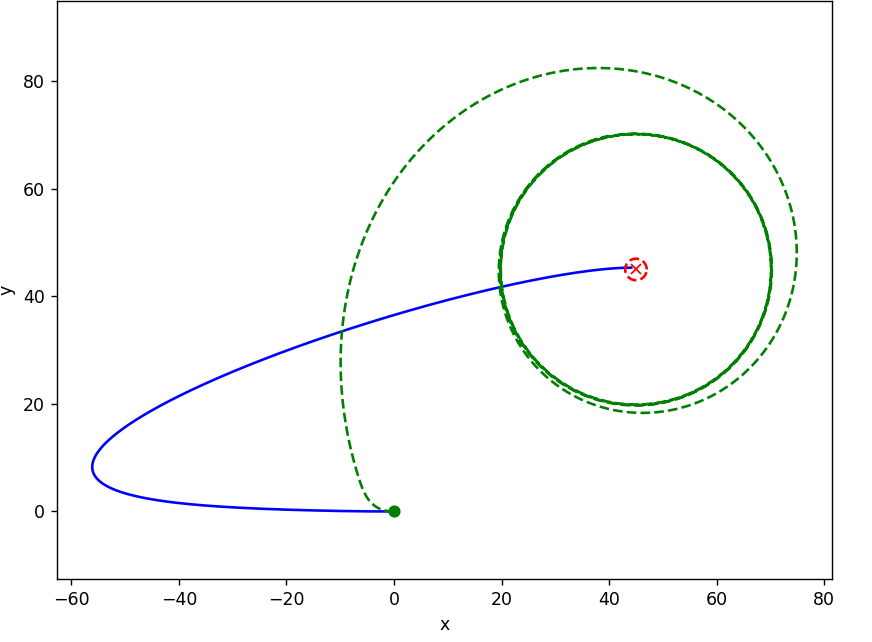}  
    \caption{Both the methods with initial condition 0,0,-10,0 on Point-Mass model T=10}
    \label{fig:6}
\end{figure}
From \ref{fig:6}, the method proposed in \cite{xiao2021highordercontrollyapunovbarrier} doesn't reach the safe set but rather keeps circling around the safe set which cannot be considered as convergence, even after trying various parameters there was not convergence in case of initial condition of (0,0,-10,0). But in our method there is convergence in prescribed amount of time which was 10 sec. (the the method proposed in \cite{xiao2021highordercontrollyapunovbarrier} was run for 5 times the prescribed time but still it didn't converge).
\newline
When the method proposed in \cite{xiao2021highordercontrollyapunovbarrier} was run for 100 different initial conditions (existing method) which are chosen uniformly between [0-20,0-20,-10-10,-10-10], when run only 1 reaches the safe set, rest 99 just keep circling around. ( the time given to run the simulation was 15 sec, but the required time of convergence was 5, but still extra time was given and only 1 converged and others clearly didn't show the property of convergence as they maintained a distance and kept circling around the safe set at a fixed distance, discretization used was 0.001). 
\newline
On comparing the computation time, the results are what are expected, the proposed method takes less time as compared to the existing ones. 
\newline

\section{Conclusions}
\label{section: Conclusions}
We have proposed a method to achieve fixed time convergence without much computation and parameters in user defined time.Some of the limitations of our proposed method is that the time taken exceeds the prescribed time by a little amount because of discretization and the control inputs are higher than the existing methods.The future works include generalizing it to $n^{th}$ order CBFs(getting the constraints on $c$), quantifying the effect of discretization and to compensate it by adding a slack term in the inequality to find input or compensating it in the time T specified by reducing the proposed time so that in reality it reaches the safe set within the specified time, extending it to constrained inputs and constrained states, create a hybrid method to account for high control input.

\label{section: References}
\bibliographystyle{IEEEtran}
\bibliography{references.bib}

\end{document}